\newtheorem{theorem}{$\mathbf{Theorem}$}
\newtheorem{lemma}[theorem]{$\mathbf{Lemma}$}
\newtheorem{Definition}{Definition}
\begin{document}

\title{Channel Matrix Sparsity with Imperfect Channel State Information in Cloud-Radio Access Networks}
\author{\IEEEauthorblockN{Di Chen, Zhongyuan Zhao, Zhendong Mao, and Mugen Peng}
\thanks{This work was supported in part by the State Major Science and Technology Special Project under Grant 2016ZX03001020-006, in part by the National Natural Science Foundation of China under Grant 61222103 and Grant 61361166005, in part by the Science and Technology Development Project of Beijing Municipal Education Commission of China (Grant No. KZ201511232036), and in part by the National Program for Special Support of Eminent Professionals.. (Corresponding author: Mugen Peng.)}
\thanks{Copyright (c) 2017 IEEE. Personal use of this material is permitted. However, permission to use this material for any other purposes must be obtained from the IEEE by sending a request to pubs-permissions@ieee.org.}\thanks{Di Chen (e-mail: chendi@bupy.edu.cn), Zhongyuan Zhao (e-mail: zyzhao@bupt.edu.cn), Zhendong Mao (e-mail: mzd@bupt.edu.cn), and Mugen~Peng (e-mail: pmg@bupt.edu.cn) are with the Key Laboratory of Universal Wireless Communications (Ministry of Education), Beijing University of Posts and Telecommunications, Beijing, China.}}
\maketitle

\begin{abstract}
Channel matrix sparsification is considered as a promising approach to reduce the progressing complexity in large-scale cloud-radio access networks (C-RANs) based on ideal channel condition assumption. In this paper, the research of channel sparsification is extend to practical scenarios, in which the perfect channel state information (CSI) is not available. First, a tractable lower bound of signal-to-interference-plus-noise ratio (SINR) fidelity, which is defined as a ratio of SINRs with and without channel sparsification, is derived to evaluate the impact of channel estimation error. Based on the theoretical results, a Dinkelbach-based algorithm is proposed to achieve the global optimal performance of channel matrix sparsification based on the criterion of distance. Finally, all these results are extended to a more challenging scenario with pilot contamination. Finally, simulation results are shown to evaluate the performance of channel matrix sparsification with imperfect CSIs and verify our analytical results.
\end{abstract}

\begin{IEEEkeywords}
Cloud radio access networks (C-RANs), performance analysis, channel sparsity
\end{IEEEkeywords}

\IEEEpeerreviewmaketitle

\section{Introduction}

The paradigm of cloud-radio access networks (C-RANs) is considered as one of the most promising approaches to satisfy the increasing traffic requirements with low cost and high efficiency [1]. Unlike the conventional cellular base stations, the baseband units (BBUs) are separated with the remote radio heads (RRHs), and a centralized BBU pool can be formed, which can provide great convenient for coordinated signal processing and flexible interference management [2].
\par
Due to dense coverage of RRHs, high computational complexity and unsatisfactory delay experience are caused by full-centralized cooperation processing [3]. In particular, the complexity of the optimal linear transceiver design grows quadratically as the network scale increases, and the power consumption will increase rapidly, which make the full-scale joint RRH cooperation and processing impractical. Therefore, low-complexity cooperation strategies have been widely studied. In particular, to maximize the throughput, distributive channel estimation schemes have been proposed in [4]-[6], and the distributive cooperation of RRHs is studied in [7]-[9]. Moreover, to improve the energy efficiency and reduce the overhead costs of signaling, the coordinated beamforming optimization with distributive cooperation of RRHs was investigated in [10]-[12], in [13], the signal processing algorithms to minimize power consumption is explored, and the corresponding performance analysis was provided in [14]. The joint optimization of energy efficiency and quality of service (QoS) is considered in [15]. In [16]-[18], interference management schemes in C-RANs were studied.
\par
Recently, a novel dynamic nested clustering (DNC) algorithm has been proposed by Zhang \textit{et al.} in [19], which can keep a balance between performance and efficiency of centralized processing. The key idea of DNC algorithm is to reduce the dimension of channel matrices by sparification. In particular, the near-sparsity of channel matrix in large C-RANs is explored, where the links with low channel gains can be ignored due to their weak impact. Therefore, the channel matrix could be approximately regarded as a sparse matrix in large-scale C-RANs. Then the RRHs are divided into disjoint clusters, and the partial centralized processing strategies can be implemented. Moreover, the study results show that the optimal linear precoding matrix can be transformed as a doubly bordered block diagonal (DBBD) matrix, whose blocks can be processed separately in parallel to greatly reduce the complexity of signal detection in the BBU pool.
\par
Although the channel sparsification-based processing scheme proposed in [19] can reduce the computational complexity, its performance cannot be guaranteed in practical systems since it is sensitive to the accuracy of channel state information (CSI). Moreover, the scale of clusters should be strictly constrained due to the performance loss caused by channel estimation errors, while a grand cluster is encouraged to form based on ideal CSIs assumption in [19]. Motivated by that, the channel sparification with imperfect CSIs in C-RANs is studied in this paper, and our main contributions can be summarized as following:
\begin{itemize}
  \item The SINR fidelity with imperfect CSIs is researched, which is defined as a ratio of SINRs with and without channel sparsification, which can characterize the performance loss. In particular, a tractable lower bound of SINR fidelity is derived. Different from the scenario with ideal CSIs in [19], our analytical results show that the cluster scale is constrained to avoid the performance loss caused by imperfect CSIs in large-scale cooperation.
  \item Based on the analytical result, a Dinkelbach-based algorithm is proposed to optimized the sparsified channel matrix and cluster formation.
  \item All these research results are extended into a challenging and practical scenario with pilot contamination, and the simulation results are provided to evaluate the performance gains of channel matrix sparification and verify our analytical results.
\end{itemize}

The rest of this paper is organized as follows. Section II introduces the system model. The lower bound of SINR fidelity and optimization of channel sparsification are given in Section III. Section IV extends our study results to a more challenging scenario with pilot contamination. The simulation results are provided in Section V, and the paper is concluded in Section VI.

\section{System Model}

Consider the uplink transmission of a large C-RAN, in which $N$ RRHs and $K$ user equipments (UEs) are randomly located in a disc-region with $r$ radius. Note that all RRHs and UEs are equipped with a single antenna. The centralized signal progressing at the BBU pool requires global CSI, which can be obtained by pilot-based channel estimation. The observation of pilots at the RRHs can be written as
\begin{eqnarray}
\begin{split}
\mathbf{S} =&[\mathbf{s}_1^H,\cdots,\mathbf{s}_N^H]^H=[\mathbf{h}_1,\cdots, \mathbf{h}_K][\mathbf{\phi}_1^H,\cdots,\mathbf{\phi}_K^H]^H\\
&+[\mathbf{n}_1^H,\cdots,\mathbf{n}_N^H]^H=\mathbf{H\Phi+N},
\end{split}
\end{eqnarray}
where $\mathbf{H}=[\mathbf{h}_1,\cdots, \mathbf{h}_K]$ denotes an $N \times K$ channel matrix, and $\mathbf{h}_i$ denotes the channel vector from UE $i$ to all RRHs. $\mathbf{\Phi}=[\mathbf{\phi}_1^H,\cdots$, and $\mathbf{\phi}_K^H]^H$ denotes an $K \times \tau$ training matrix, $\mathbf{\phi}_i$ denotes a $K \times 1$ training sequence from UE $i$, and $\mathbf{S}=[\mathbf{s}_1^H,\cdots,\mathbf{s}_N^H]^H$ is an $N \times \tau$ observation. $\mathbf{s}_i$ denotes a $\tau \times 1$ observation vector at RRH $i$, and $\mathbf{N}$ is the additive white Gaussian noise at the RRHs. To ensure the pilot contamination can be avoided, $\mathbf{\phi}_i$-s follows the constraints of $\mathbf{\phi}_i^H \mathbf{\phi}_j=0,$ $i \neq j$, and $\tau \geq K$. Moreover, the observations of data blocks at all RRHs can be expressed as
\begin{eqnarray}
\begin{split}
\mathbf{y}=&[y_1,\cdots,y_N]=[\mathbf{h}_1,\cdots, \mathbf{h}_K]diag\{\sqrt{P_1}, \dots, \sqrt{P_K}\}\\ & \cdot [x_1,\cdots,x_K]^H
+\mathbf{n}=  \mathbf{H \sqrt{P} x+n},
\end{split}
\end{eqnarray}
where $\mathbf{P}$ is the transmit power matrix of data block, whose diagonal elements are the transmit power of all users, and $\mathbf{x}=[x_1,\cdots,x_K]^H$ is an $K \times 1$ data block vector. The $i$th element of $\mathbf{x}$ is the data symbol sent by UE $i$, i.e., $\mathbb{E} \{\mathbf{x} \mathbf{x}^\mathit{H}\} =\mathbf{I}_K$. $\mathbf{n}$ denotes the additive white Gaussian noise at RRHs, $\mathbf{n} \sim \it{CN}(\mathbf{0},N_0\mathbf{I}_\tau)$.

The channels are modeled by including both path loss and flat Rayleigh fading at the channel matrix, and $\mathbf{H}$ can be expressed as:

\begin{eqnarray}
\mathbf{H}=[d_{n,k}^{-{\alpha}/{2}}]_{N \times K} \odot [\gamma_{n,k}]_{N\times K}  =\mathbf{D \odot \Gamma},
\end{eqnarray}
where $\mathbf{A \odot B}$ denotes Hadamard product of matrices $\mathbf{A}$ and $\mathbf{B}$. $\mathbf{D}$ and $\mathbf{\Gamma}$ are the matrices characterizing the path loss and the flat Rayleigh fading of channels, respectively, whose elements, i.e., $d_{n,k}^{-\alpha/2}$ and $\gamma_{n,k}$ denote the path loss and the flat Rayleigh fading of the link from UE $k$ to RRH $n$. $\alpha$ denotes the path loss exponent.

Although full centralized processing can maximize the cooperation gains, it may cause high computational complexity which has a significant impact on the QoS guaranteeing in C-RANs. To make a balance between the performance and the efficiency, a channel sparsification strategy with perfect CSIs has been proposed in [19]. To extend the proposed scheme to the practical scenarios, channel sparsification with imperfect CSIs is discussed in this paper.

\subsection{Channel Estimation and Sparsification}

In this paper, LS estimation is used to obtain the channel matrix. To ensure the transmission performance, orthogonal resource units are allocated to UEs, and thus the co-channel interference can be removed completely. The estimates of channel matrix at the BBU pool can be written as
\begin{eqnarray}
\mathbf{H}^{est}= \mathbf{S \Phi}^\dag= \mathbf{H}+ \mathbf{N \Phi}^\dag=  \mathbf{H}+\mathbf{E},
\end{eqnarray}
where $\mathbf{\Phi}^\dag$ is the pseudo inverse matrix of $\mathbf{\Phi}$ with $\,$ $\mathbf{\Phi}^\dag=(\mathbf{ \Phi \Phi}^H)^{-1} \mathbf{\Phi}^H$, and $\mathbf{E}=\mathbf{N \Phi}^\dag$ denotes an $N \times K$ estimation error matrix.
Referring [19], the impact of sparsification is mainly determined by the path loss, i.e., the error caused by the other UEs far from the UE $k$ can be ignored. Therefore, a pathloss-based criterion is used for channel sparsification in this paper, which can be expressed as
\begin{eqnarray}
\widehat{h}_{n,k}=\left\{
\begin{array}{rcl}
\bar{h}_{n,k}+\bar{e}_{n,k} & & {d_{n,k} \leq d_0} \\
0                           & & {else}
\end{array} \right.,
\end{eqnarray}
where $\widehat{h}_{n,k}$ denotes the $\{n,k\}$th entry of sparse observed channel matrix $\widehat{\mathbf{H}}$; $\bar{h}_{n,k}$ and $\bar{e}_{n,k}$ are the $\{n,k\}$th entry of sparse channel matrix $\bar{\mathbf{H}}$ and sparse error matrix $\bar{\mathbf{E}}$, respectively; $d_0$ is the threshold of link distance. The user position can be determined in wireless communication systems by several well-known technologies, such as time difference of arrival (TDOA) [20]. Therefore, the link distance information can be obtained at real C-RANs in practice. Moreover, since the link distance varies relatively slowly comparing with the fast fading coefficients, the amount of overhead to estimate the link distances is far less than the overhead to estimate the fast channel fading coefficients. In this paper, it is assumed to be previously estimated. Since the acquisition of CSIs is challenging in C-RANs due to the large number of channel parameters and time delay among different nodes [21], with the distance-based channel sparsification, the channel estimation overhead can be significantly reduced. As illustrated in Fig.$\,$1, the entries $\bar{h}_{n,k}$ and $\bar{e}_{n,k}$ are set to zero when the link distance $d_{n,k}$ is larger than distance threshold $d_0$. Thus the observed channel can be expressed by
\begin{eqnarray}
\mathbf{\widehat{H}}
=(\mathbf{H}^{est})^{spar}=\mathbf{\bar{H}}+\mathbf{\bar{E}}.
\end{eqnarray}
\begin{figure}[t!]\centering
\label{sys}
\centering
\includegraphics[width=0.5\textwidth]{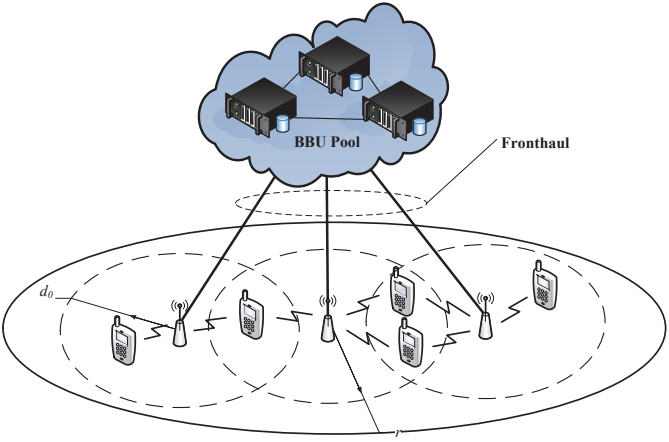}
\vspace*{-10pt} \caption{Link Distance based Clustering in large C-RANs}\vspace*{-10pt} \label{sys}
\end{figure}
Although the channel sparsification can reduce the amount of required CSIs at the BBU pool, the accuracy of CSIs is lowered by sparsification error.
After sparsification, the actual channel matrix $\mathbf{H}$ is divided into two parts, denoted by $\mathbf{H=\bar{H}+\widetilde{H}}$, where $\mathbf{\widetilde{H}}$ consists of channel coefficients with link distance greater than $d_0$. The estimation error matrix can be denoted by $\mathbf{E=\bar{E}+\widetilde{E}}$.
Thus, the actual channel experienced by signal can be derived as
\begin{eqnarray}
\mathbf{H}=  \mathbf{\widehat{H}} + \mathbf{\widetilde{H}}- \mathbf{\bar{E}},
\end{eqnarray}
where the second term is the sparsification error and the third term is the estimation error, respectively.

\subsection{Data Detection}
Based on (2) and (7), the received data signal $\mathbf{y}$ can be rewritten as
\begin{eqnarray}
\mathbf{y} = \mathbf{\widehat{H}\sqrt{P}x} +  \mathbf{\widetilde{H}\sqrt{P}x} -  \mathbf{\bar{E}\sqrt{P}x} + \mathbf{n}.
\end{eqnarray}

By using MMSE detector, the detection of data vector $\mathbf{x}$ can be expressed by
\begin{eqnarray}
\mathbf{\widehat{x}}=\mathbf{\widehat{V}}^H \mathbf{y},
\end{eqnarray}
and the $k$th element in $\mathbf{x}$ can be expressed as
\begin{eqnarray}
\begin{split}
\widehat{x}_k =& \sqrt{P_k} \mathbf{v}_k^H \mathbf{\widehat{h}}_k x_k + \sqrt{P_k} \mathbf{v}_k^H (\mathbf{\widetilde{h}}_k -\mathbf{\bar{e}}_k) x_k \\
&+  \mathbf{v}_k^H \sum_{j\neq k}\sqrt{P_j}\mathbf{h}_j x_j + \mathbf{v}_k^H \mathbf{n},
\end{split}
\end{eqnarray}
where the second one is the interference caused by the same transmitter, the third one is the interference resulting from other transmitters, and the last one is caused by noise in receiver.
\par
The MMSE beamforming matrix in (9) is generated by
\begin{eqnarray}
\mathbf{\widehat{V}} = \mathbf{\widehat{A}}^{-1} \sqrt{\mathbf{P}} \mathbf{\widehat{H}},
\end{eqnarray}
where
\begin{eqnarray}
\mathbf{\widehat{A}} = \mathbf{\widehat{H}}\mathbf{P} \mathbf{\widehat{H}}^H + \mathbf{\Gamma} + N_0 \mathbf{I},
\end{eqnarray}
and
\begin{eqnarray}
\begin{split}
\mathbf{\Gamma} =&
\mathbb{E} \left\{ (\mathbf{\widetilde{H}}-\mathbf{\bar{E}}) \mathbf{P} \mathbf{\widehat{H}}^H+\mathbf{\widehat{H}} \mathbf{P} (\mathbf{\widetilde{H}}-\mathbf{\bar{E}})^H \right. \\
&\left.+(\mathbf{\widetilde{H}}-\mathbf{\bar{E}}) \mathbf{P}(\mathbf{\widetilde{H}}-\mathbf{\bar{E}})^H \right\}.
\end{split}
\end{eqnarray}

Conditioning on the distance threshold $d_0$, the matrices $\mathbf{\bar{H}}, \mathbf{\widetilde{H}}, \mathbf{\bar{E}}$ and $\mathbf{\widetilde{E}}$
are mutually independent. With randomly located RRHs and UEs, $\mathbb{E} \lbrace \widetilde{\mathbf{H}}\widetilde{\mathbf{H}}^H
 \rbrace$ and $ \mathbb{E} \lbrace \bar{\mathbf{E}}\bar{\mathbf{E}}^H \rbrace$ turn to diagonal matrices with unified diagonal elements, and other terms in the right hand of (13) turn to null matrices. Thus, (13) can be written as
\begin{eqnarray}
\mathbf{\Gamma}=N_1 \mathbf{I}+N_2\mathbf{I},
\end{eqnarray}
where $N_1=\mathbb{E} \left\{ \sum_{j=1}^K P_j \vert \widetilde{h}_{nj} \vert^2 \right\}
  $ and $N_2= \mathbb{E} \left\{ \sum_{j=1}^K P_j \vert \bar{e}_{nj} \vert^2 \right\} $ for the arbitrary RRH $n$, which characterize the ignoring actual channels and the remaining error after sparsification, respectively.
Notice that the number of RRHs should be greater or at least close to the number of UEs, since during the progress of $\mathbb{E} \lbrace \widetilde{\mathbf{H}}\widetilde{\mathbf{H}}^H
 \rbrace$ and $ \mathbb{E} \lbrace \bar{\mathbf{E}}\bar{\mathbf{E}}^H \rbrace$, the number of RRHs is assumed to be large enough to apply Law of Large Numbers.
Treating the first term in (10) to be the signal, the other three terms
 as the interference plus noise, the $SINR$ of user $k$ becomes
\begin{eqnarray}
\begin{split}
 &\widehat{SINR}_k(d_0) =\\
&\frac{P_k|\mathbf{\widehat{v}}_k^H \mathbf{\widehat{h}}_k|^2}{P_k|\mathbf{\widehat{v}}_k^H(\mathbf{\widetilde{h}}_k - \mathbf{\bar{e}}_k)|^2
 +\sum_{j \neq k}P_j|\mathbf{\widehat{v}}_k^H \mathbf{h}_j|^2 + N_0\mathbf{\widehat{v}}_k^H \mathbf{\widehat{v}}_k}.
 \end{split}
\end{eqnarray}
\par To investigate the relative performance loss, an ideal $SINR_k$ without channel sparsity is needed. Likewise, it can be expressed as
\begin{eqnarray}
SINR_k = \frac{ P_k|\mathbf{v}_k^H \mathbf{h}_k|^2}{\sum_{j \neq k} P_j |\mathbf{v}_k^H \mathbf{h}_j|^2 + N_0 \mathbf{v}_k^H \mathbf{v}_k},
\end{eqnarray}
where $\mathbf{h}_k$ and $\mathbf{v}_k$ being the $k$th column of actual channel matrix $\mathbf{H}$ and corresponding detection matrix $\mathbf{V}$. The detection matrix $\mathbf{V}$ can be expressed as

\begin{eqnarray}
  \mathbf{V} =  \mathbf{A}^{-1}\sqrt{\mathbf{P}}\mathbf{H} ,
  \end{eqnarray}
where
  \begin{eqnarray}
  \mathbf{A}= \mathbf{H P H}^H + N_0 \mathbf{I}.
  \end{eqnarray}

Both (17) and (18) are derived based on MMSE detection. Next, the SINR fidelity is analyzed in the following section.
\newcounter{mytempeqncnt}
\begin{figure*}[!t]
\normalsize
\setcounter{mytempeqncnt}{\value{equation}}
\setcounter{equation}{19}
\begin{eqnarray}
\label{eqn_dbl_x}
\mathbb{E}\lbrace \widehat{SINR}_k(d_0) \rbrace
=\mathbb{E} \left\{ \frac{ P_k \vert \mathbf{\widehat{v}}_k^H \mathbf{\widehat{h}}_k \vert^2}{\mathbf{\widehat{v}}_k^H \left( P_k(\mathbf{\widetilde{h}}_k-\mathbf{\bar{e}}_k)(\mathbf{\widetilde{h}}_k-\mathbf{\bar{e}}_k)^H+\sum_{j \neq k}  P_j \mathbf{h}_j \mathbf{h}_j^H +N_0 \mathbf{I}\right)\mathbf{\widehat{v}}_k} \right\},
\end{eqnarray}
\setcounter{equation}{\value{mytempeqncnt}}
\hrulefill
\vspace*{4pt}
\end{figure*}
\begin{figure*}[!t]
\normalsize
\setcounter{mytempeqncnt}{\value{equation}}
\setcounter{equation}{22}
\begin{eqnarray}
\begin{split}
\mathbb{E}\lbrace \widehat{SINR}_k(d_0) \rbrace
&=\mathbb{E} \left\{{\rm tr} \left[  P_k \mathbf{\widehat{h}}_k \mathbf{\widehat{h}}_k^H \left(\mathbf{\widehat{A}}- P_k\mathbf{\widehat{h}}_k \mathbf{\widehat{h}}_k^H\right)^{-1} \right] \right\}\\
& =\mathbb{E} \left\{{\rm tr} \left[  P_k \mathbf{\widehat{h}}_k \mathbf{\widehat{h}}_k^H \left(\sum_{j \neq k}  P_j \mathbf{\widehat{h}}_j \mathbf{\widehat{h}}_j^H +(N_0+N_1+N_2) \mathbf{I} \right)^{-1} \right] \right\}.
\end{split}
\end{eqnarray}
\setcounter{equation}{\value{mytempeqncnt}}
\hrulefill
\vspace*{4pt}
\end{figure*}
\begin{figure*}[!t]
\normalsize
\setcounter{mytempeqncnt}{\value{equation}}
\setcounter{equation}{23}
\begin{eqnarray}
\mathbb{E}\lbrace \widehat{SINR}_k(d_0) \rbrace  \geq
P_k  \mathbb{E} \left\{ {\rm tr} \left[ \mathbf{\bar{h}}_k \mathbf{\bar{h}}_k^H  \left( \sum_{j \neq k}  P_j \mathbf{\widehat{h}}_j
\mathbf{\widehat{h}}_j^H +(N_0+N_1+N_2) \mathbf{I} \right)^{-1} \right] \right\} .
\end{eqnarray}
\setcounter{equation}{\value{mytempeqncnt}}
\hrulefill
\vspace*{4pt}
\end{figure*}

\section{Performance Analysis of Channel Sparsification with Imperfect CSIs}

To investigate the performance loss due to sparsity, comparing the SINR with and without channel sparsity is more intuitive than discussing the absolute value of SINR directly.

\begin{Definition}[SINR Fidelity]
The SINR fidelity compares the SINR with and without channel sparsity when link distance threshold is $d_0$, which is defined as:
\begin{eqnarray}
\rho(d_0)=\frac{\mathbb{E}\lbrace\widehat{SINR}_k(d_0)\rbrace}{\mathbb{E}\lbrace SINR_k \rbrace}.
\end{eqnarray}
\end{Definition}
Note that $\rho(d_0)$ is a ratio with value range $[0,1]$. From [19], it shows that SINR fidelity is an increasing function until reaching to maximum value with perfect CSI. To discuss the relative performance loss with imperfect CSI when channel sparsity is applied, a lower bound of SINR fidelity is derived and discussed in the following part.
\par
Given a distance threshold $d_0$, the expectation of $k$th UE's SINR in (15) becomes (20).
\setcounter{equation}{20}
By substituting (12) into (20), it can be rewritten as
\begin{eqnarray}
\mathbb{E}\lbrace \widehat{SINR}_k(d_0) \rbrace=\mathbb{E} \left\{ \frac{P_k \vert \mathbf{\widehat{v}}_k^H \mathbf{\widehat{h}}_k \vert^2}{\mathbf{\widehat{v}}_k^H \left( \mathbf{\widehat{A}}-P_k \mathbf{\widehat{h}}_k \mathbf{\widehat{h}}_k^H \right)\mathbf{\widehat{v}}_k} \right\}
\end{eqnarray}
where $\mathbf{\widehat{v}}_k=\sqrt{P_k} \mathbf{\widehat{A}}^{-1} \mathbf{\widehat{h}}_k$ is the $k$th column of detection matrix in (11). Thus (21) can be written as
\begin{eqnarray}
\begin{split}
&\mathbb{E}\lbrace \widehat{SINR}_k(d_0) \rbrace = \mathbb{E} \left\{ \frac{\sqrt{P_k} \mathbf{\widehat{h}}_k^H  \mathbf{\widehat{A}}^{-1} \mathbf{\widehat{h}}_k}{1-\sqrt{P_k}
\mathbf{\widehat{h}}_k^H  \mathbf{\widehat{A}}^{-1} \mathbf{\widehat{h}}_k} \right\} \\
&= \mathbb{E} \left\{{\rm tr} \left[ \sqrt{P_k} \mathbf{\widehat{h}}_k \mathbf{\widehat{h}}_k^H  \mathbf{\widehat{A}}^{-1} \left( \mathbf{I}-\sqrt{P_k} \mathbf{\widehat{h}}_k
\mathbf{\widehat{h}}_k^H \mathbf{A}^{-1} \right)^{-1} \right]  \right\}.
\end{split}
\end{eqnarray}

 After introducing $N_1$ and $N_2$ noted in (14), (22) can be written as (23). Next, the relatively small estimation error in
$\mathbf{\widehat{h}}_k \mathbf{\widehat{h}}_k^H$ is ignored, thus (23) turns into (24).
\par To character $N_1$ and $N_2$ in (14), a probability density function ($p.d.f$) of the link distance $d_{n,k}$ is applied here.
The distance distributions for different network area shapes are derived in [4], such as circle, square, and rectangle. The circular network area as a example is considered in this paper, whose radius is set $r$. According to these derivation results in [4], the distance distribution between two random points is:
\setcounter{equation}{24}
\begin{eqnarray}
f(x,r)=
\left\{
\begin{array}{rcl}
&\frac{r_0^2}{r^2}                                                                & x=r_0 \\
& \frac{2}{r^2}x             & r_0\leq x \leq r
\end{array} \right.,
\end{eqnarray}
where $f(x,r)$ is related to both the distance threshold and the circular area radius $r$. Note that it can be easily extended to other shape network by exchange the distance distribution derived in [22]. $\mu=\mathbb{E} \left\{ |h_{n,k}|^2\right\} =\int_{r_0}^rx^{-\alpha}f(x,r)dx$ is denoted for each $\{n,k\}$. Accordingly, the after-sparse version of $\mu$ is
$\bar{\mu}=\mathbb{E} \left\{ |\bar{h}_{n,k}|^2\right\}=\int_{r_0}^{d_0}x^{-\alpha} f(x,r) dx$.
Thus, (24) can be written as
\begin{eqnarray}
\begin{split}
&\mathbb{E}\lbrace \widehat{SINR}_k(d_0) \rbrace \geq \\
&\bar{\mu} \sqrt{P_k}  \mathbb{E} \left\{ {\rm tr}  \left( \sum_{j \neq k} \sqrt{P_j} \mathbf{\widehat{h}}_j\mathbf{\widehat{h}}_j^H +(N_0+N_1+N_2)
\mathbf{I} \right)^{-1}  \right\}.
\end{split}
\end{eqnarray}

Likewise, the expectation of SINR with no sparsification and estimation errors in (16) can be written as
\begin{eqnarray}
\begin{split}
\mathbb{E}\{SINR_k\}& = \mathbb{E} \left\{ \frac{\sqrt{P_k} \vert \mathbf{v}_k^H \mathbf{h}_k \vert^2}{\mathbf{v}_k^H \left(\sum_{j \neq k} \sqrt{P_j} \mathbf{h}_j \mathbf{h}_j^H + N_0 \mathbf{I} \right) \mathbf{v}_k} \right\}\\
&=\mu \sqrt{P_k} \mathbb{E} \left\{ \left(\sum_{j \neq k} \sqrt{P_j} \mathbf{h}_j \mathbf{h}_j^H +N_0 \mathbf{I} \right)^{-1} \right\}.
\end{split}
\end{eqnarray}
\par
Thus, the SINR fidelity can be derived by comparing $\mathbb{E}\lbrace \widehat{SINR}_k(d_0) \rbrace$ in (26) and $SINR_k$ in (27), and the corresponding result is summarized as:

\begin{theorem}[The SINR Fidelity]
Given a distance threshold $d_0$, the SINR fidelity is
\begin{eqnarray}
\rho(d_0) \geq \uline{\rho(d_0)} \triangleq \frac{\bar{\mu}}{\mu} \frac{N_0}{N_0+N_1+N_2},
\end{eqnarray}
where $N_1=\sum_{j=1}^K P_j(\mu-\bar{\mu})$ characterizes the ignored CSI by sparsification; $N_2= ( K^2/\tau P_T)\sum_{j=1}^K P_j \int_{r_0}^{d_0} f(x)dx$, $P_T$ is the transmit power of pilot, i.e. ${\rm tr}\left\{\mathbf{\Phi \Phi}^H\right\}=P_T$, characterizes the remained estimation error when UEs and RRHs located randomly.
\end{theorem}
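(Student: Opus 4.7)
The plan is to combine the already-derived inequality (26) and identity (27) into a single ratio, extract the signal-power prefactor $\bar{\mu}/\mu$ immediately, and then reduce the remaining work to one clean matrix-trace inequality. Dividing (26) by (27), the $\sqrt{P_k}$ factors cancel and the scalar $\bar{\mu}/\mu$ pulls out, leaving
\begin{equation*}
\rho(d_0)\;\ge\;\frac{\bar{\mu}}{\mu}\cdot\frac{\mathbb{E}\bigl\{\mathrm{tr}\bigl(\mathbf{B}_{\widehat{H}}+(N_0+N_1+N_2)\mathbf{I}\bigr)^{-1}\bigr\}}{\mathbb{E}\bigl\{\mathrm{tr}\bigl(\mathbf{B}_H+N_0\mathbf{I}\bigr)^{-1}\bigr\}},
\end{equation*}
where $\mathbf{B}_{\widehat{H}}\triangleq\sum_{j\neq k}\sqrt{P_j}\widehat{\mathbf{h}}_j\widehat{\mathbf{h}}_j^H$ and $\mathbf{B}_H\triangleq\sum_{j\neq k}\sqrt{P_j}\mathbf{h}_j\mathbf{h}_j^H$. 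It then suffices to show the trace ratio is at least $N_0/(N_0+N_1+N_2)$.

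I would establish that trace inequality in two steps. First, sparsification only deletes longer-distance interferers, so the same law-of-large-numbers reduction over random UE/RRH positions that was used to collapse $\mathbf{\Gamma}$ to $(N_1+N_2)\mathbf{I}$ in (14) also turns $\mathbf{B}_{\widehat{H}}$ and $\mathbf{B}_H$ into scalar multiples of the identity, with the sparsified scalar no larger than the full one; this gives $\mathbf{B}_{\widehat{H}}\preceq\mathbf{B}_H$, hence $\mathrm{tr}(\mathbf{B}_{\widehat{H}}+c\mathbf{I})^{-1}\ge\mathrm{tr}(\mathbf{B}_H+c\mathbf{I})^{-1}$ for every $c>0$. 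Second, for the common matrix $\mathbf{B}_H$ with nonnegative eigenvalues $\lambda_i$, a cross-multiplication verifies the pointwise inequality $(\lambda_i+N_0)/(\lambda_i+N_0+N_1+N_2)\ge N_0/(N_0+N_1+N_2)$, so summing over $i$ yields
\begin{equation*}
\mathrm{tr}\bigl(\mathbf{B}_H+(N_0+N_1+N_2)\mathbf{I}\bigr)^{-1}\;\ge\;\frac{N_0}{N_0+N_1+N_2}\,\mathrm{tr}\bigl(\mathbf{B}_H+N_0\mathbf{I}\bigr)^{-1}.
\end{equation*}
Chaining the two bounds and taking expectations delivers the desired trace ratio.

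The final piece is to put $N_1$ and $N_2$ in closed form. For $N_1$, each discarded entry obeys $\mathbb{E}\{|\widetilde{h}_{n,j}|^2\}=\int_{d_0}^{r}x^{-\alpha}f(x,r)\,dx=\mu-\bar{\mu}$ by the definitions of $\mu,\bar{\mu}$ after (25), giving $N_1=\sum_{j=1}^{K}P_j(\mu-\bar{\mu})$. For $N_2$, from $\mathbf{E}=\mathbf{N}\mathbf{\Phi}^{\dagger}$ with orthogonal pilots $\phi_i^H\phi_j=0$ and power budget $\mathrm{tr}\{\mathbf{\Phi}\mathbf{\Phi}^H\}=P_T$, the per-entry variance of the pre-sparsification error $\mathbf{E}$ is determined explicitly; multiplying by the retention probability $\int_{r_0}^{d_0}f(x)\,dx$ of a link surviving sparsification and summing over $j$ reproduces the stated $\bigl(K^2/(\tau P_T)\bigr)\sum_{j}P_j\int_{r_0}^{d_0}f(x)\,dx$ form. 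The main obstacle is the first half of the trace bound, namely justifying $\mathbf{B}_{\widehat{H}}\preceq\mathbf{B}_H$ rigorously, because the estimation error inside $\widehat{\mathbf{h}}_j$ can in principle inflate the interference Gram matrix; the intended justification rides on the same large-network concentration already accepted in (14), after which the remaining inequality is the elementary scalar one above.
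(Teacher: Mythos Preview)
Your proposal is essentially the same argument as the paper's. The paper also divides (26) by (27), pulls out $\bar{\mu}/\mu$, replaces $\sum_{j\neq k}P_j\widehat{\mathbf h}_j\widehat{\mathbf h}_j^H$ by $\sum_{j\neq k}P_j\mathbf h_j\mathbf h_j^H$ (stated simply as ``ignoring the estimation error'' rather than via a Loewner-order inequality), and then reduces the remaining trace ratio to $N_0/(N_0+N_1+N_2)$ by exactly the same per-eigenvalue cross-multiplication $(\lambda_i+N_0)(N_0+N_1+N_2)\ge N_0(\lambda_i+N_0+N_1+N_2)$; the closed forms of $N_1,N_2$ are derived identically. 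Your only substantive departure is that you try to justify the $\widehat{\mathbf h}_j\to\mathbf h_j$ replacement as a genuine inequality $\mathbf B_{\widehat H}\preceq\mathbf B_H$ via the large-network concentration of (14), whereas the paper treats it as a heuristic approximation---but you correctly flag this as the weak point, and neither version makes it fully rigorous.
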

\begin{proof}[Proof]
By ignoring the estimation error in $\sum_{j \neq k} P_S \mathbf{\widehat{h}}_j
\mathbf{\widehat{h}}_j^H$ in (26), a lower bound of SINR can be expressed as
\begin{eqnarray}
\begin{split}
&\rho(d_0)=\frac{\widehat{SINR}_k(d_0)}{SINR_k} \\
&\geq \frac{\bar{\mu}}{\mu} \frac{\mathbb{E}\left\{{\rm tr}\left(\sum_{j\neq k} P_j \mathbf{h}_j \mathbf{h}_j^H+
(N_0+N_1+N_2) \mathbf{I} \right)^{-1} \right\}}{\mathbb{E}\left\{{\rm tr}\left( \sum_{j \neq k}  P_j \mathbf{h}_j \mathbf{h}_j^H + N_0 \mathbf{I} \right)^{-1} \right\}}\\
&\geq \frac{\bar{\mu}}{\mu} \frac{\sum_{i=1}^N \frac{1}{\lambda_i+N_0+N_1+N_2}}{\sum_{i=1}^N \frac{1}{\lambda_i+N_0}},
\end{split}
\end{eqnarray}
where$ \lambda_1, \lambda_2, \dots \lambda_N$ are the eigenvalues of Hermitian semi-positive matrix $\sum_{j \neq k} P_j\mathbf{h}_j \mathbf{h}_j^H$. By multiplying a same factor $1/ N_0$ to the numerator and denominator in (29), (29) can be expressed as
\begin{eqnarray}
\begin{split}
\rho(d_0) & \geq \frac{\bar{\mu}}{\mu} \frac{\sum_{i=1}^N \frac{1}{N_0(\lambda_i+N_0+N_1+N_2)}}{\sum_{i=1}^N \frac{1}{N_0(\lambda_i+N_0)}} \\
&\geq \frac{\bar{\mu}}{\mu} \frac{\sum_{i=1}^N \frac{1}{\lambda_i+N_0} \cdot \frac{1}{N_0+N_1+N_2}}{\sum_{i=1}^N \frac{1}{\lambda_i+N_0} \cdot \frac{1}{N_0}}.
\end{split}
\end{eqnarray}
Then, the SINR fidelity is simplified to the following form:
\begin{eqnarray}
\rho(d_0) \geq \uline{\rho(d_0)} \triangleq \frac{\bar{\mu}}{\mu} \frac{N_0}{N_0+N_1+N_2}.
\end{eqnarray}

Given a $p.d.f$ of the link distance mentioned in (25), $N_1$ becomes
\begin{eqnarray}
N_1=\mathbb{E} \left\{ \sum_{j \neq k}P_j  \vert \widetilde{h}_{nj} \vert^2 \right\}= \sum_{j \neq k} P_j (\mu-\bar{\mu}),
\end{eqnarray}
which declines with $d_0$. When the distance threshold satisfies $d_0=r$, i.e., no sparsification is implied, $N_1$ will vanish.
\par
And $N_2$ can be expressed as
\begin{eqnarray}
N_2= \mathbb{E} \left\{  \sum_{j \neq k} P_j \vert \bar{e}_{nj} \vert^2 \right\}=\frac{K^2}{\tau P_T}  \sum_{j \neq k} P_j \int_{r_0}^{d_0} f(x)dx,
\end{eqnarray}
which increases with the increase of $d_0$, and reaches the maximum value when $d_0=r$.
\end{proof}
\par

As shown in (28), SINR fidelity increases with transmit power $P_k$, it also indicates that performance loss due to channel sparsity increases with the total transmit power $P_{total}$.
Note that estimation method we use has no effect on the conclusions of our work, except the specific coefficient variation of $N_2$:
\begin{eqnarray}
N_2^{MMSE}= \mathbb{E} \left\{ \sum_{j \neq k} P_j \vert \bar{e}_{nj} \vert^2 \right\} = \left(\frac{\mu}{\mu+N_0}\right)^2 N_2^{LS}.
\end{eqnarray}
To be clear, we have provided the form and corresponding derivation of $N_2$ in Appendix A when MMSE estimator is applied.

\begin{theorem}
A sole optimal distance threshold that maximizes the SINR fidelity exists when RRHs and UEs located randomly.
\begin{proof}
Please refer to Appendix B
\end{proof}
\end{theorem}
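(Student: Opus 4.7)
The plan is to reduce the question to showing that the first-order stationarity condition of the lower bound $\uline{\rho(d_0)}$ in (28) is satisfied at exactly one point of $[r_0,r]$.

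First I would differentiate $\uline{\rho(d_0)}=(\bar\mu/\mu)\,N_0/(N_0+N_1+N_2)$ with respect to $d_0$, using $\bar\mu^{\,\prime}(d_0)=d_0^{-\alpha}f(d_0,r)$, $N_1^{\,\prime}(d_0)=-P\bar\mu^{\,\prime}(d_0)$ with $P=\sum_j P_j$, and $N_2^{\,\prime}(d_0)=\tilde c\,f(d_0,r)$ with $\tilde c=(K^2/\tau P_T)P$, all of which follow directly from the integral representations appearing around (25), (32), and (33). Applying the quotient rule and dividing through by the strictly positive factor $f(d_0,r)$ on $(r_0,r]$, the stationarity condition $d\uline{\rho}/dd_0=0$ collapses to
\begin{equation*}
d_0^{-\alpha}\bigl(N_0+P\mu+N_2(d_0)\bigr)=\tilde c\,\bar\mu(d_0).
\end{equation*}

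Next I would define the gap $\Delta(d_0)\triangleq d_0^{-\alpha}(N_0+P\mu+N_2(d_0))-\tilde c\,\bar\mu(d_0)$ and differentiate it once. The key cancellation is that the contribution $d_0^{-\alpha}N_2^{\,\prime}(d_0)=\tilde c\,d_0^{-\alpha}f(d_0,r)$ from the left-hand side exactly matches $\tilde c\,\bar\mu^{\,\prime}(d_0)=\tilde c\,d_0^{-\alpha}f(d_0,r)$ from the right, leaving
\begin{equation*}
\Delta^{\,\prime}(d_0)=-\alpha\,d_0^{-\alpha-1}\bigl(N_0+P\mu+N_2(d_0)\bigr)<0.
\end{equation*}
Hence $\Delta$ is strictly decreasing on $[r_0,r]$. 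Since $\bar\mu(r_0)=0$ and $N_2(r_0)=0$, one has $\Delta(r_0)=r_0^{-\alpha}(N_0+P\mu)>0$, so $\Delta$ changes sign at most once in $(r_0,r]$.

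Finally I would conclude by cases: if $\Delta(r)\ge 0$ then $d\uline{\rho}/dd_0\ge 0$ throughout $[r_0,r]$ and the unique maximizer is the boundary $d_0^{\star}=r$; otherwise the unique zero of $\Delta$ in $(r_0,r)$ is the only sign change of $d\uline{\rho}/dd_0$ (from positive to negative, by the sign pattern of the quotient-rule numerator) and therefore the unique interior maximizer. The step I expect to be the main obstacle is precisely the cancellation producing the clean form of $\Delta^{\,\prime}$: it hinges on the fact that the same density $f(\cdot,r)$ from (25) drives both $\bar\mu^{\,\prime}$ and $N_2^{\,\prime}$, and that the path-loss weight $d_0^{-\alpha}$ lands exactly on the term through which $N_2^{\,\prime}$ enters the first-order condition; careless bookkeeping of those two derivatives would leave a residual $f(d_0,r)$ term and spoil the monotonicity needed for uniqueness.
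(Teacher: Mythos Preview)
Your proposal is correct and follows the same high-level strategy as the paper's Appendix~B: differentiate the lower bound $\uline{\rho(d_0)}$, isolate the numerator of the quotient-rule expression, show that this numerator is strictly decreasing in $d_0$, and check the endpoint signs to pin down a unique sign change. The paper carries this out by first substituting the explicit density (25), rewriting $\uline{\rho(d_0)}$ with constants $A,B,C$, and then verifying by brute force that the derivative $\dot z(d_0)$ of the numerator is negative and that $z(r_0)>0$, $z(r)<0$. Your route is the same argument done abstractly: you keep $f(\cdot,r)$ generic and exploit the structural identity $d_0^{-\alpha}N_2'(d_0)=\tilde c\,\bar\mu'(d_0)$ to obtain $\Delta'(d_0)=-\alpha d_0^{-\alpha-1}(N_0+P\mu+N_2)<0$ in one line. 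This buys you generality (the monotonicity step does not depend on the particular $f$ in (25)) and a much cleaner computation; what the paper's explicit substitution buys is the concrete verification that $z(r)<0$, i.e.\ that the maximizer is strictly interior rather than at the boundary $d_0=r$, which you leave as a case split. Either way the uniqueness conclusion is secured.
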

\par
 $\rho(d_0)$ is a convex function of distance threshold $d_0$, after the first increase benefit from cooperation, it declines due to introducing too many harmful channels with little channel gains but considerable estimation error. The distance threshold maximizing the SINR fidelity is the optimal $d_0$ that can balance the tradeoff between cost and gain.\par
By introducing the $p.d.f$ of link distance in (26), the optimal distance threshold $d_{opt}$ is the positive real number solution of $\dot{\rho}(d_0)=0$, which means to solve the following equation:
\begin{eqnarray}
\begin{split}
&(2-\alpha)A(2N_0+2B r^{2-\alpha}-\alpha B r_0^{2-\alpha})d_0^{1-\alpha}\\
&+2\alpha AC (r_0^{2-\alpha}d_0-d_0^{3-\alpha})=0,
\end{split}
\end{eqnarray}
where $A$, $B$, and $C$ are the defined in Appendix A. Unfortunately, (35) does not have a closed-form.

\section{Optimization of Channel Matrix Sparsity with Dinkelbach-based Algorithm}

The results of Section III are built on the assumption that all RRHs and UEs are independent and identically distributed. In this part, an alternating-Dinkelbach optimization is proposed to extend the study to a more general case, in which the optimal distance threshold can be obtained as long as the specific $p.d.f$ is known.
All the required parameters in the proposed algorithm can be obtained in the practical systems, the computational complexity is acceptable and can be implemented. Note that the proposed algorithm is with channel estimation, which cannot be avoided in practical systems, and thus the proposed algorithm is more applicable.
\par
Based on (29), obtaining the optimal $d_0$ is equivalent to solving the following optimization problem:
\begin{eqnarray}
(P1):\arg\max_{d_0}=\frac{F_1(d_0)}{F_2(d_0)},
\end{eqnarray}
where
\begin{eqnarray*}
F_1(d_0)=N_0 \int_{r_0}^{d_0}x^{-\alpha}f(x)dx>0,
\end{eqnarray*}
and
\begin{eqnarray*}
\begin{split}
F_2(d_0)= &\int_{r_0}^rf(x)dx \left[N_0+ \sum_{j=1}^K P_j \int_{d_0}^r x^{-\alpha} f(x)dx \right. \\
&\left. + \frac{K^2}{\tau P_T }  \sum_{j=1}^K P_j \int_{r_0}^{d_0}f(x)dx \right]>0.
\end{split}
\end{eqnarray*}

The objective function in Problem (P1) can be transferred from the fractional form to the subtractive form via Dinkelbach algorithm. In particular,
defining a parameter $q$ as
\begin{eqnarray}
q=\frac{F_1(d_0)}{F_2(d_0)}.
\end{eqnarray}

Then, a subtractive form optimization problem with a given parameter $q$ can be formulated as
\begin{eqnarray}
(P2):\arg\max_{d_0} F_1(d_0)-q F_2(d_0).
\end{eqnarray}

To solve Problem (P2), we first present the following lemma.\\
\begin{lemma} \label{1}
\begin{eqnarray}\nonumber
q'=\frac{F_1(d_0')}{F_2(d_0')}= \max_{d_0} \frac{F_1(d_0)}{F_2(d_0)},
\end{eqnarray}
if and only if\
\begin{eqnarray}
\begin{split}
F(q')=\max_{d_0} F_1(d_0)-q' F_2(d_0) \\
=F_1(d_0')-q 'F_2(d_0') =0.
\end{split}
\end{eqnarray}
\end{lemma}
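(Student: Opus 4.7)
The plan is to prove the equivalence by using the standard Dinkelbach argument, exploiting the fact stated in the definitions of $F_1$ and $F_2$ that both are strictly positive (and $F_2(d_0)>0$ in particular lets us multiply and divide freely without flipping inequalities). I will split the biconditional into two implications.

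For the forward direction, I would assume $q' = F_1(d_0')/F_2(d_0') = \max_{d_0} F_1(d_0)/F_2(d_0)$. Then for every feasible $d_0$ we have $F_1(d_0)/F_2(d_0) \le q'$; multiplying by $F_2(d_0)>0$ gives $F_1(d_0) - q' F_2(d_0) \le 0$, with equality at $d_0'$. Taking the maximum over $d_0$ therefore yields $\max_{d_0}\bigl[F_1(d_0) - q' F_2(d_0)\bigr] = 0$, attained precisely at $d_0'$, which is what (39) states.

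For the reverse direction, I would assume $F(q') = \max_{d_0}[F_1(d_0) - q' F_2(d_0)] = 0$ and that this optimum is attained at some $d_0'$ with $F_1(d_0') - q' F_2(d_0') = 0$. Then for every feasible $d_0$, $F_1(d_0) - q' F_2(d_0) \le 0$, and dividing by $F_2(d_0)>0$ gives $F_1(d_0)/F_2(d_0) \le q'$. Combined with $F_1(d_0')/F_2(d_0') = q'$, this shows $q'$ is the maximum of the fractional objective and is attained at $d_0'$.

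The hard part is essentially conceptual rather than technical: the whole argument rests on the positivity $F_2(d_0)>0$ on the feasible range of $d_0$, which has already been asserted below the definition of $F_2$ in the derivation of Problem (P1), so the proof reduces to a careful bookkeeping of inequalities and the two directions of implication. No calculus, no analytic properties of $F_1, F_2$, and no structure of the $p.d.f.$ $f(x)$ are needed, which is exactly why Dinkelbach's transformation is attractive here: it isolates the fractional difficulty into a scalar root-finding problem for $F(q)=0$, setting up the alternating-Dinkelbach iteration described in Section IV.
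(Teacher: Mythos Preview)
Your proposal is correct and follows the same standard Dinkelbach argument as the paper, relying on $F_2(d_0)>0$ to pass freely between the fractional and subtractive formulations. The paper's own proof is considerably more terse: it essentially only sketches one direction, observing that if $d_0\neq d_0'$ then $F_1(d_0)/F_2(d_0)<q'$ forces $F_1(d_0)-q'F_2(d_0)<0$, and leaves the rest implicit; your version proves both implications explicitly and is therefore more complete, but not different in spirit.
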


\par

\begin{proof}[Proof]
In(39), if $d_0 \neq d_0'$,
\begin{eqnarray}\nonumber
\frac{F_1(d_0)}{F_2(d_0)} < \frac{F_1(d_0')}{F_2(d_0')}= q',
\end{eqnarray}
then leads to
\begin{eqnarray}\nonumber
F_1(d_0)-q'F_2(d_0)<0,
\end{eqnarray}
which is in conflict with the condition in (39).
\end{proof}

 $\mathbf{Lemma}$ \ref{1} reveals to solve Problem (P1) with an objective function in fractional form, there exists a corresponding Problem (P2) in
subtractive form. Moreover, $\mathbf{Lemma}$ \ref{1} provides the condition when two problem formulations can lead to the same optimal solution $d_0'$.
To reach the condition in (39), we focus on Problem (P2) first.
\begin{lemma}
The optimization objective in Problem (P2) is a concave function in terms of $d_0$.
\end{lemma}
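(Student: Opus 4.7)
The plan is to differentiate the objective $F_1(d_0) - q F_2(d_0)$ twice with respect to $d_0$ and show that the result is non-positive for all $d_0 \in [r_0, r]$. Note that the factor $\int_{r_0}^r f(x)\,dx$ in $F_2$ does not depend on $d_0$, and $q \geq 0$ since it is a ratio of strictly positive quantities by the definition in (37). Consequently, it suffices to prove separately that $F_1$ is concave in $d_0$ and that $F_2$ is convex in $d_0$, so that $F_1 - q F_2$ is the sum of two concave functions.

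First I would handle $F_1$. Applying the Leibniz rule gives $F_1'(d_0) = N_0\, d_0^{-\alpha} f(d_0)$, and substituting the $p.d.f.$ $f(x) = 2x/r^2$ from (25) yields $F_1'(d_0) = (2N_0/r^2)\, d_0^{1-\alpha}$. Differentiating again produces $F_1''(d_0) = (2N_0/r^2)(1-\alpha)\, d_0^{-\alpha}$, which is non-positive because the path-loss exponent satisfies $\alpha > 1$ in any relevant wireless setting. Hence $F_1$ is concave.

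Next I would tackle $F_2$. Writing $C_0 \triangleq \int_{r_0}^r f(x)\,dx$ and $S \triangleq \sum_{j=1}^K P_j$, the Leibniz rule gives
\begin{equation*}
F_2'(d_0) = C_0 S f(d_0)\left[\frac{K^2}{\tau P_T} - d_0^{-\alpha}\right],
\end{equation*}
and then a second differentiation together with the explicit form $f(x)=2x/r^2$ gives
\begin{equation*}
F_2''(d_0) = \frac{2 C_0 S}{r^2}\left[\frac{K^2}{\tau P_T} + (\alpha-1)\, d_0^{-\alpha}\right].
\end{equation*}
Both bracketed terms are strictly positive for $\alpha > 1$, so $F_2''(d_0) > 0$ and $F_2$ is convex. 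Combining the two results shows $(F_1 - q F_2)''(d_0) \leq 0$ for every $q \geq 0$, which establishes concavity of the objective of Problem (P2).

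The main obstacle I anticipate is purely bookkeeping: correctly applying Leibniz's rule to the integrals with $d_0$ as a limit, and making the sign arguments rigorous by invoking the standing assumption $\alpha > 1$. If the paper wishes to keep the statement general (arbitrary $p.d.f.$ $f$), the argument would additionally require $f'(d_0) \geq 0$ and $\frac{d}{dd_0}(d_0^{-\alpha}f(d_0)) \leq 0$, which hold for the circular geometry used but should be flagged as structural conditions on $f$ in the broader setup mentioned at the start of Section IV.
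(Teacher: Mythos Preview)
Your proof is correct and follows essentially the same approach as the paper's Appendix: compute the second derivative, establish $F_1''(d_0)\le 0$ and $F_2''(d_0)\ge 0$ separately, and then use $q\ge 0$ to conclude that $F_1 - qF_2$ is concave. The only cosmetic difference is that the paper keeps $f$ symbolic in its intermediate expressions while you substitute $f(x)=2x/r^2$ from (25) before differentiating; your calculation is in fact tidier, and your closing remark about the structural conditions on a general $f$ is a useful addition that the paper does not spell out.
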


\begin{proof}
Please refer to Appendix 3
\end{proof}

\begin{figure*}[!t]
\normalsize
\setcounter{mytempeqncnt}{\value{equation}}
\setcounter{equation}{46}
\begin{eqnarray}
\rho(d_0)=\frac{\bar{\mu}}{\mu} \frac{N_0}{N_0+PK(\mu-\bar{\mu})+P \left[ 2K\left(1-\frac{\tau}{K} \right) \mu + \frac{K^3}{\tau P_T} \right] \int_{r_0}^{d_0} f(x)dx }.
\end{eqnarray}
\setcounter{equation}{\value{mytempeqncnt}}
\hrulefill
\vspace*{4pt}
\end{figure*}
 According to Lemma 4 and the fact that the feasible set for $d_0$ is convex, Problem (P2) is convex optimization problem. With the bisection method, the optimum solution of (P2) can be obtained.
\begin{algorithm}
\caption{Dinkelbach Algorithm of Distance Threshold $d_0$}
\begin{algorithmic}[1]
\STATE \textbf{Initialization:}\\
Set the parameter as $q=0$, the index of iteration as $n=1$, and the judgement of convergence as $conv=0$.\\
Set the maximum number of iterations as $n_{max}$, and the threshold of termination as $\Delta$, which is a constant that approaches 0.\\
\STATE \textbf{Repeat:}\\
Set $n=n+1$,\\
Solve Problem (P2) through bisection method, and mark the optimal solution as $d_0^n$,\\
  \textbf{If} $F(q)=F_1(d_0^n)-qF_2(d_0^n)<\Delta$\\
Mark the optimal solution as $d_0'=d_0^n$, and set $conv=1$.\\
  \textbf{Else}
Update $q$ according to (37).\\
\STATE \textbf{Until:}
$conv=1$ or $n=n_{max}$.\\
\STATE \textbf{Return:} the optimal distance threshold $d_{opt}$.\\
\end{algorithmic}
\end{algorithm}
\begin{lemma}
$\mathbf{Algorithm}$ 1 converges to $q'$ and $d_0'$, which satisfy (39) in $\mathbf{Lemma}$ 3.
\end{lemma}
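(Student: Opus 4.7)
The plan is to establish convergence by exploiting monotonicity and boundedness of the sequence $\{q_n\}$ generated by Algorithm~1, a standard Dinkelbach-type argument. First I would introduce the auxiliary function $F(q) \triangleq \max_{d_0} [F_1(d_0) - q F_2(d_0)]$. Since $F_2(d_0) > 0$ on the feasible set, $F(q)$ is continuous and strictly decreasing in $q$: for any $q_1 < q_2$ and any fixed $d_0$, $F_1(d_0) - q_1 F_2(d_0) > F_1(d_0) - q_2 F_2(d_0)$, which carries over to the pointwise maxima. Combined with Lemma~3, this yields the equivalence $F(q) > 0 \iff q < q'$, $F(q) = 0 \iff q = q'$, and $F(q) < 0 \iff q > q'$, where $q' = \max_{d_0} F_1(d_0)/F_2(d_0)$.

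Next I would show that the iterates $\{q_n\}$ are non-decreasing and bounded above by $q'$. At iteration $n$, the algorithm computes $d_0^n = \arg\max_{d_0} [F_1(d_0) - q_n F_2(d_0)]$ via bisection (which is justified by the concavity established in Lemma~4) and then updates $q_{n+1} = F_1(d_0^n)/F_2(d_0^n)$. If $q_n < q'$, then $F(q_n) > 0$, i.e., $F_1(d_0^n) - q_n F_2(d_0^n) > 0$. Dividing by $F_2(d_0^n) > 0$ gives $q_{n+1} > q_n$. Also, by definition of $q'$, $q_{n+1} = F_1(d_0^n)/F_2(d_0^n) \leq q'$.

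The final step is to pass to the limit. By the monotone convergence theorem, $q_n \to q^{\star} \leq q'$. The quantity $F(q_n) = F_1(d_0^n) - q_n F_2(d_0^n)$ is non-negative and non-increasing along the iteration (since $F$ is strictly decreasing and $q_n$ is non-decreasing), so it also converges. By continuity of $F$, $F(q_n) \to F(q^{\star})$, and since $q_{n+1} - q_n = F(q_n)/F_2(d_0^n) \to 0$ with $F_2$ bounded away from zero on the feasible region, we get $F(q^{\star}) = 0$. By the equivalence established above, this forces $q^{\star} = q'$, and consequently the associated maximizer $d_0^n$ converges to $d_0'$ satisfying (39). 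The termination criterion $F(q) < \Delta$ then guarantees that Algorithm~1 halts in finitely many steps within any prescribed tolerance.

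The main obstacle I anticipate is not the algebraic inequalities but rather justifying the interchange of limits and the implicit assumption that the bisection subroutine returns the exact (or sufficiently accurate) maximizer of the concave subproblem at each iteration; once this is granted, the monotone-bounded argument above is essentially routine. A minor subtlety is verifying that $F_2(d_0^n)$ stays uniformly bounded away from zero so that the update (37) is well-defined throughout, which follows from the structure of $F_2$ and the fact that $d_0 \in [r_0, r]$ is a compact feasible interval.
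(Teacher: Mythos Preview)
Your proposal is correct and follows essentially the same standard Dinkelbach-type argument as the paper: both establish that $q_n$ is increasing, that $F(q_n)\ge 0$ and decreasing, and then conclude $F(q_n)\to 0$ so that the limit satisfies~(39). Your version is in fact slightly more complete than the paper's, since you explicitly supply the boundedness step $q_{n+1}\le q'$ and the continuity/limit argument that the paper leaves implicit.
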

\par
 \begin{proof}
 For the purpose of explanation, at the $n$th iteration, when denoted by $q_n$, the optimized solution to Problem (P1) is denoted by $d_0^n$. According
to (37), the parameter is updated by $q_{n+1}=\frac{F_1(d_0^n)}{F_2(d_0^n)}$ in the next iteration.

First, it is revealed that the optimized value $F(q)$ in Problem (P2) is non-negative, i.e.,

\begin{eqnarray}
\begin{split}
F(q_{n+1})& =\max_{d_0} F_1(d_0)-q_{n+1}F_2(d_0) \\
& \geq F_1(d_0^n)-q_{n+1}F_2(d_0^n)=0.
\end{split}
\end{eqnarray}

Next, it is revealed that the parameter $q$ increases after each iteration. According to $\mathbf{Lemma}$ 3 and (37), since the iteration process is not terminated in the $(n + 1)$th iteration, it can be determined that $F(q_n) > 0$ and $F(q_{n+1}) > 0$. Thus, $F(q_n)$ can be expressed as

\begin{eqnarray}
\begin{split}
F(q_n)&=F_1(d_0^n)-q_n F_2(d_0^n) \\
&=(q_{n+1}-q_n) F_2(d_0^n) >0.
\end{split}
\end{eqnarray}

Further, it is shown that $F(q)$ decreases after each iteration. Due to the increasing of $q$ after each iteration, $F(q_n)$ can be expressed as
\begin{eqnarray}
\begin{split}
F(q_n)&=\max_{d_0} F_1(d_0)-q_n F_2(d_0) \\
&\geq F_1(d_0^{n+1})-q_n F_2(d_0^{n+1}) \\
&> F_1(d_0^{n+1})-q_{n+1}F_2(d_0^{n+1})\\
&=F(q_{n+1}).
\end{split}
\end{eqnarray}
\par
Based on the properties in (40) and (42), $F(q_{n})$ is proved to be a nonnegative value after each iteration. As the proposed algorithm indicated, after sufficient iterations (less than $n_{max}$), $F(q)$ can approach to 0 (gap is less then the predefined nonnegative threshold of termination $\Delta$), and the corresponding $d_0^n$ is believed to be close enough to optimal $d_0$.

 As a result, $\mathbf{Algorithm}$ 1 converges
to $d_0'$ and $q'$, which satisfy (39) in $\mathbf{Lemma}$ 3. Therefore, $d_0'$ is the optimal solution to Problem (P1).
\end{proof}


The computational complexity of Algorithm 1 is reflected to the number of iterations, which can be estimated as $\mathcal{O} ( log(1/\Delta))$ [13], and $\Delta$ is a prescribed accuracy. In each iteration, problem (P2) is solved by using bisection method, which is in order of $\mathcal{O} ( log(r))$ [14], and $r$ is the system size. Therefore, the computation complexity of our Algorithm 1 is in order of $\mathcal{O} ( log(r)log(1/\Delta))$.

\section{System Performance with Non-orthogonal Training}

In large C-RANs, as the orthogonal training would need to be least $K$ symbols long, which is infeasible for large $K$, non-orthogonal training sequences could be utilized. In particular, the short channel coherence time due to mobility does not allow for such long training sequences.

\subsection{Channel Estimation Phase}

The extent to which pilot contamination impacting on the system performance under different scenarios have been researched in the literatures [23] and [24]. During analyzing the effects of pilot contamination, the worst-case scenario is considered in this paper, in which no allocation of training sequences or subspace estimation technique is applied. The training sequences are generated by the discrete Fourier transform matrix in [25] with $\tau < K$ , thus, $2(K-\tau)$ UEs are interfered by other UEs due to the properties of training matrix $\mathbf{\Phi}$ when the training length is $\tau$. Hence, the probability of one UE suffering from other UE's interference becomes $2(1-\frac{\tau}{K})$. As a result, the interference due to the pilot contamination of random UE is quantified as the probability of suffering from the pilot contamination multiplying by the average magnitude of random channel coefficient. Therefore, the observed channel coefficient can be expressed as:
\begin{eqnarray}
\widehat{h}_{nk}=h_{nk}+\epsilon_{nk}+e_{nk},
\end{eqnarray}
where $\epsilon_{n,k} \sim \mathcal{CN}\left(0,2(1-\frac{\tau}{K})\mathbb{E}\{\vert h_{n,k} \vert^2\} \right)$ represents the interference from other channels;
$e_{n,k} \sim \mathcal{CN}\left(0,\frac{K^2}{\tau P_T} \right)$ represents the Gaussian noise; $\mathbb{E}\{\vert h_{n,k} \vert^2\}$ is the statistic average
of squared amplitude of random channel coefficient. To obtain the statistic characters of $\epsilon_{n,k}$, the $p.d.f$ of the link distance can be used. With $\mu=\mathbb{E}\{\vert h_{n,k} \vert^2\}$, the interference can be expressed as
\begin{eqnarray}
 \epsilon_{n,k} \sim \mathcal{CN}\left(0, 2\left(1-\frac{\tau}{K}\right)  \mu \right).
\end{eqnarray}

Thus, the expectation of covariance of estimation error using non-orthogonal training becomes
\begin{eqnarray}
\mathbb{E} \left\{  \mathbf{EE}^H \right\} = \left[ 2K\left(1-\frac{\tau}{K} \right) \mu + \frac{K^3}{\tau P_T} \right]
\mathbf{I}.
\end{eqnarray}

Same as $N_2$ in (33), $N_2$ with non-orthogonal training can be written as
\begin{eqnarray}
N_2= P \left[ 2K\left(1-\frac{\tau}{K} \right) \mu + \frac{K^3}{\tau P_T} \right] \int_{r_0}^{d_0} f(x)dx.
\end{eqnarray}
\par
 $N_1$ and $N_2$ both increase with the decreasing $\tau$. However, the selection of optimal training length is determined by the tradeoff between the
quality of channel estimate and the information throughput. Hence, the channel accesses employed for pilot transmission and for data transmission needs to be
optimized for maximizing total throughput and fairness of the system.

\subsection{SINR Fidelity with Non-orthogonal Training}

Except the different form of $N_2$ we proposed in former part, the derivation of SINR fidelity with non-orthogonal training is similar
with the derivation with orthogonal training in Part A, Section III. Based on (32), (33) and (46), the lower bound of SINR fidelity with non-orthogonal training is expressed in (47).
To prove the convexity of SINR fidelity with non-orthogonal training, the similar derivations of (47) have been made by applying the $p.d.f$ of link distance given in (25). (Proof: See Appendix B)
\par
Comparing (47) and (28),
it can be concluded that the severe interference caused by pilot contamination grows with the service radius of RRHs increasing. The influence of interference caused by pilot contamination on the SINR fidelity is similar with that of estimation error, but the former one is more serious. In the future, the training schedule should be done to further decrease the pilot interference.

\section{Numerical Results}

To verify the analysis results for optimal distance threshold with imperfect CSIs in large C-RANs, several simulation results under different scenarios are presented in this section.
 UEs and RRHs are uniformly scattered in a circular area with $r=5\,\rm{km}$, the minimum distance between RRH and UE is $10\,\rm{m}$. The path loss exponent is $3.8$. For each realization, $1000$ iterations are simulated with independent channel state. MATLAB is used in this paper to simulate and verify our research results. If not specifically pointed out, the simulation parameters are summarized in TABLE I.
\begin{table*}[!htp]
\newcommand{\tabincell}[2]{\begin{tabular}{@{}#1@{}}#2\end{tabular}}
\caption{SIMULATION PARAMETERS SETTINGS}
\centering
\small
\begin{tabular}{|l|l|l|}
\hline
Symbol                 & Definition           &Value\\
\hline
$N$          & The number of RRHs                         &1000\\
\hline
$K$          & The number of UEs                          &800\\
\hline
$\alpha$       &Path loss exponent                          &3.8\\
\hline
$r_0$          &Minimum distance between RRHs and UEs       &10 $m$\\
\hline
$P_k$      &Transmit power of data block                &23 $dBm$\\
\hline
$N_0$      &Noise power spectral density               &-174 $dBm/Hz$\\
\hline
$\Delta$        &Threshold of termination                &$10^{-4}$ \\
\hline
$n_{max}$       &Maximum number of iteration               &$20$ \\
\hline
\end{tabular}
\end{table*}
\par

\begin{figure}[t!]\centering
\label{sys}
\centering
\includegraphics[width=0.43\textwidth]{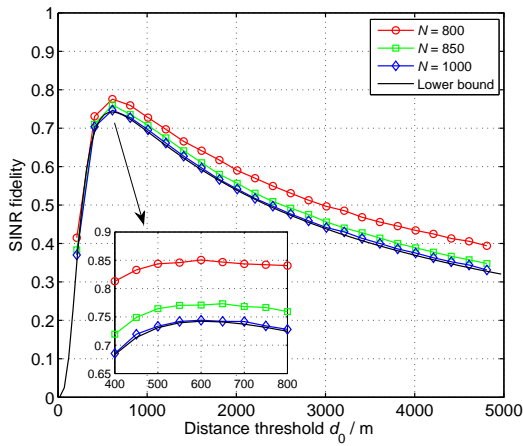}
\vspace*{-10pt} \caption{SINR fidelity vs the distance threshold and the lower bound}\vspace*{-10pt} \label{sys}
\end{figure}

\begin{figure}[t!]\centering
\label{sys}
\centering
\includegraphics[width=0.43\textwidth]{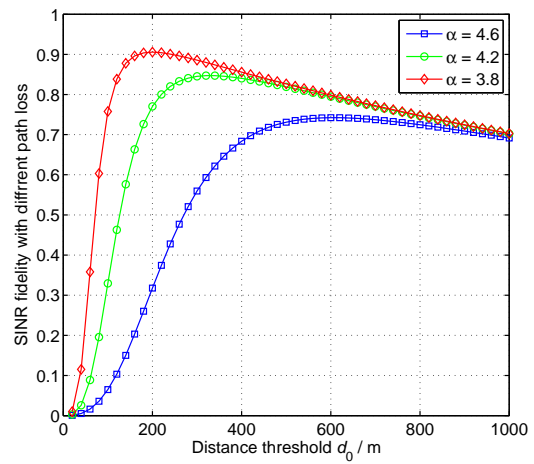}
\vspace*{-10pt} \caption{SINR fidelity vs distance threshold with different path loss exponent}\vspace*{-10pt} \label{sys}
\end{figure}

\begin{figure}[t!]\centering
\label{sys}
\centering
\includegraphics[width=0.43\textwidth]{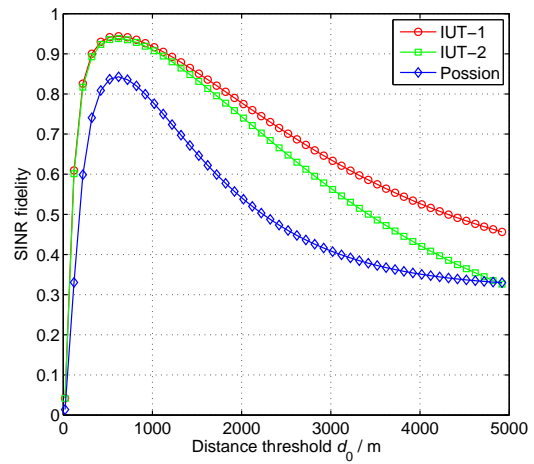}
\vspace*{-10pt} \caption{SINR fidelity vs distance threshold with different $p.d.f$ of link distance}\vspace*{-10pt} \label{sys}
\end{figure}

\begin{figure}[t!]\centering
\label{sys}
\centering
\includegraphics[width=0.43\textwidth]{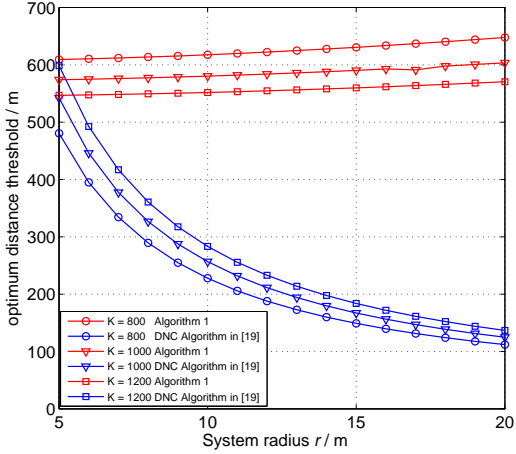}
\vspace*{-10pt} \caption{Two ways of obtaining optimum distance threshold }\vspace*{-10pt} \label{sys}
\end{figure}
First, the SINR fidelity and corresponding lower bounds with imperfect CSIs are illustrated. As shown in Fig.\;2, the derived lower bounds are coincide with the simulation
results throughout the whole $d_0$ region under different user density, which validates the high accuracy of our analysis. Since Monte Carlo simulation results vary with $N$, while the lower bounds remain unchanged with different $N$, we set the number of RRHs to be $N=800\,/\,850\,/\,1000$.  Fig. 2 illustrates the derived SINR fidelity is more matchable with Monte Carlo results when $N$ is bigger, which indicates this research is more recommended to applied in dense networks. However, to be clear, $N$ does not have to be greater than $K$, since the convexity of SINR fidelity still holds with a smaller curvature when $N \leq K$. \par

As can be seen, the results of Monte Carlo simulation are coincide with the lower bounds.
The optimal distance threshold $d_0$ converges to a constant when the number of RRHs goes to infinity. This feature implies that the number of nonzero entries per row or per column in $\widehat{H}$ (which is approximately $d_0^2/r^2$) does not scale with the number of RRHs in a large dense C-RANs. This enlighten us assume that each RRH only needs to estimate the CSIs of a small number of users that are close to it when RRHs are deployed in a denser way comparing to UEs.\par

The LOS and NLOS characteristics would be different in practice, which indicates the path loss exponent would experience a sudden raise as the increasing link distance. Fig. 3 illustrates the SINR fidelity with different path loss exponent and it shows that the optimal distance threshold decreases when the system experiences a deep path loss. As such, the convexity of SINR fidelity is guaranteed in this case.
\par
Next, the SINR fidelity with different $p.d.f$ of link distance is displayed in Fig. 4, where the link distance of IUT-1 follows independently and uniformly distribution, corresponding $p.d.f$ is as follows:
\begin{eqnarray*}
\begin{split}
&f(x,r)=\\
&\left\{
\begin{array}{rcl}
&\int_0^{d_0} \frac{2x}{r^2} \left( \frac{2}{\pi} arccos\left(\frac{y}{2r}\right)-\frac{y}{2r}\sqrt{1-\frac{y^2}{4r^2}}\right)                                                                & x=r_0 \\
& \frac{2x}{r} \left(\frac{2}{\pi} arccos\left( \frac{x}{2r}\right)-\frac{x}{2r}\sqrt{1-\frac{x^2}{4r^2}}\right)             & r_0\leq x \leq r
\end{array} \right.,
\end{split}
\end{eqnarray*}
 and IUT-2 is an approximation of IUT-1 when system size $r$ is sufficiently large, the corresponding $p.d.f$ is noted in (25). The $p.d.f$s of Poisson distribution is
 \begin{eqnarray*}
f(x,n)=2(\pi \lambda) x e^{-\pi \lambda x^2}, r>0,
\end{eqnarray*}
where $\lambda$ is the density of Poisson Point Process, which is set to be $1/\pi r^2$ in our simulations.
It can be seen that the convexity of SINR fidelity holds when RRHs and UEs follow different distributions.
\par

To make it clear, the difference between the Dinkelbach-based algorithm we proposed and the DNC algorithm we mentioned in the introduction section has to be emphasized. The DNC algorithm is used in the progressing of received signals to reduce the computation complexity of large-scale cooperative system. In the DNC algorithm, distance threshold $d_0$ is given as
\setcounter{mytempeqncnt}{\value{equation}}
\setcounter{equation}{47}
\begin{eqnarray}
d_0=\left( r^{2-\alpha}+\frac{\alpha r_0^{2-\alpha}-2r^{2-\alpha}(1-\rho')N_0}{2N_0+\frac{2\rho'(\alpha r_0^{2-\alpha}-2r^{2-\alpha})(K-1)P_k}{(\alpha-2)r^2}} \right)^{\frac{1}{\alpha-2}},
\end{eqnarray}
when the SINR requirement $\rho'$ is previously determined and channel estimation has not been considered. Specifically, these two ways of obtaining distance threshold in the case of the number of UEs $K = 800$ are compared in Fig. 5 and Fig. 6, respectively. As shown in Fig. 5, the obtained distance threshold in DNC algorithm is smaller than that in the proposed algorithm. In Fig. 6, the optimal SINR fidelity obtained by the proposed algorithm is better than that of the DNC algorithm in [19], which indicates that the utilization of Dinkelbach-based algorithm will lead to superior performance to DNC algorithm with the existent of channel estimation error.

\par
\begin{figure}[t!]\centering
\label{sys}
\centering
\includegraphics[width=0.43\textwidth]{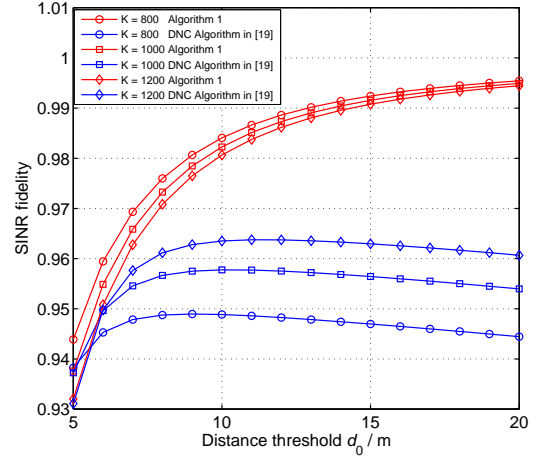}
\vspace*{-10pt} \caption{Maximum SINR fidelity obtained by Dinklebach-based Algorithm and DNC Algorithm in [19]}\vspace*{-10pt} \label{sys}
\end{figure}

\begin{figure}[t!]\centering
\label{sys}
\centering
\includegraphics[width=0.43\textwidth]{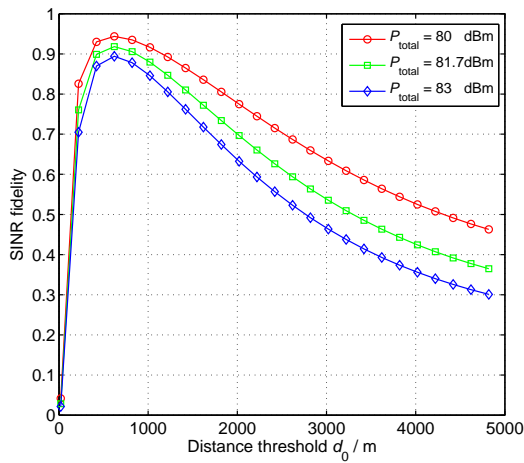}
\vspace*{-10pt} \caption{SINR fidelity vs distance threshold with different total transmit power }\vspace*{-10pt} \label{sys}
\end{figure}

\begin{figure}[t!]\centering
\label{sys}
\centering
\includegraphics[width=0.43\textwidth]{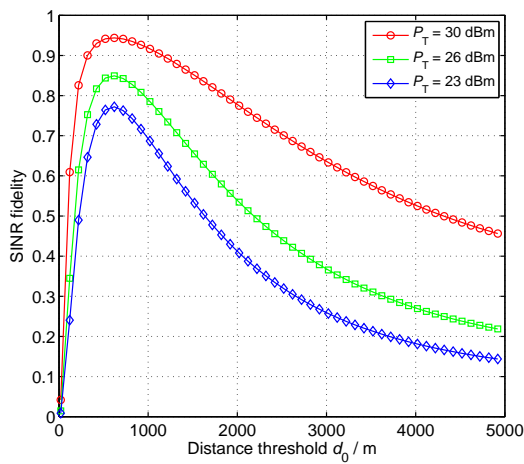}
\vspace*{-10pt} \caption{SINR fidelity with different transmit power of pilot when transmit power of data is fixed to 23 $dBm$}\vspace*{-10pt} \label{sys}
\end{figure}

\begin{figure}[t!]\centering
\label{sys}
\centering
\includegraphics[width=0.43\textwidth]{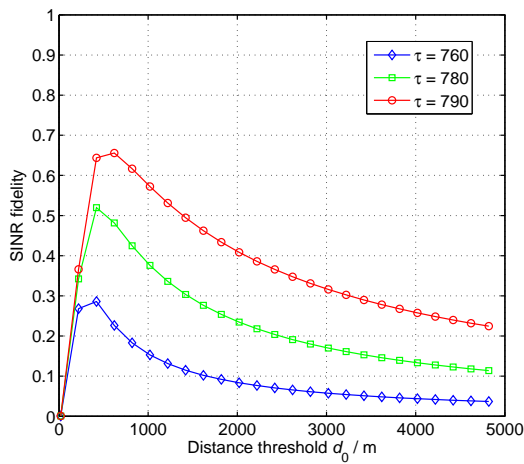}
\vspace*{-10pt} \caption{SINR fidelity with different training length $\tau=760 \;/\;780\; /\; 790$ when the number of user is $K=800$}\vspace*{-10pt} \label{sys}
\end{figure}

In Fig. 7, the SINR fidelity with different total transmit power $P_{total}$ has been illustrated, SINR fidelity decreases with $P_{total}$ as shown. The effect of transmit power of training block is illustrated in Fig. 8, where the transmit power of data block is fixed to 23 dBm, and the transmit power of pilot block is set to 23 dBm, 26 dBm, 30 dBm, respectively. With the decrease of transmit power of pilot block, the influence due to estimation error increases, which leads to the degradation of SINR fidelity. The simulation results of Fig. 8 suggest that proposal still work efficiently and the corresponding results still hold when the transmit power of pilot is near or even equals to the transmit power of data.
\par
\begin{figure*}[!t]
\normalsize
\setcounter{mytempeqncnt}{\value{equation}}
\setcounter{equation}{54}
\begin{eqnarray}
\uline{\dot{\rho}(d_0)}=\frac{(2-\alpha)A(2N_0+2B r^{2-\alpha}-\alpha B r_0^{2-\alpha})d_0^{1-\alpha}+2\alpha AC (r_0^{2-\alpha}d_0-d_0^{3-\alpha})}
{(N_0+Br^{2-\alpha}-Bd_0^{2-\alpha}+C d_0^2)^2},
\end{eqnarray}
\setcounter{equation}{\value{mytempeqncnt}}
\hrulefill
\vspace*{4pt}
\end{figure*}
Fig.\;9 illustrates the effects of non-orthogonal training on system performance when the number of UEs is $800$. The trend of
SINR fidelity with non-orthogonal training is similar to which with orthogonal training, which is in conform with our analysis in Part B,
Section V. With a slightly decrease of training length, the SINR fidelity deteriorate badly. The severe performance degradation caused by pilot contamination still remains as a big issue for large C-RANs.

\section{Conclusion}

In this paper, an analytical model has been presented to character the performance of large C-RANs with sparse channel matrix and estimation error.
To explore the performance loss in terms of the distance threshold with imperfect CSI, a lower bound of a ratio comparing the SINR with and without sparsification of channel matrix has been derived. This ratio has been proved to be a convex function of service distance with both orthogonal and non-orthogonal training sequences when RRHs and UEs are distributed randomly. An algorithm has been presented to obtain the optimal distance threshold when RRHs and UEs distribute unevenly. Note that power control strategy has a great impact on the performance of channel sparsity in large C-RANs [26]. For one UE, the increase of transmit power leads to a better SINR fidelity, however, the joint optimization of all UEs' SINR fidelity remains a challenging issue, which will be researched in the future.

\appendices
\section{Derivation of $N_2$ when MMSE estimator is applied}

The MMSE estimation of channel matrix is:
\begin{eqnarray}
\mathbf{H}^{MMSE} = \mathbf{S}(\mathbf{\Phi}^H \mathbf{R}_H \mathbf{\Phi} + N_0  \mathbf{I})^{-1}\mathbf{\Phi}^H \mathbf{R}_H,
\end{eqnarray}
where $\mathbf{R}_H = \mathbb{E} \{ \mathbf{H}^H \mathbf{H} \}$ is the correlation matrix of channel matrix $\mathbf{H}$.
\par
Since the fast fading coefficients of all channels are considered to be uncorrelated, the correlation matrix can be expressed as:
\begin{eqnarray}
\begin{split}
\mathbf{R}_H &= \mathbb{E} \{ \mathbf{H}^H \mathbf{H} \}=\mathbb{E}\left\{ diag \left[ \sum_{j=1}^N \vert h_{j1} \vert^2, \dots, \sum_{j=1}^N \vert h_{jK} \vert^2 \right]\right\}\\
&= \mathbb{E}\left\{ diag \left[ \sum_{j=1}^N  d_{j1}^2, \dots, \sum_{j=1}^N d_{jK}^2 \right] \right\}.
\end{split}
\end{eqnarray}
Then the estimates of channel matrix can be expressed as:
\begin{eqnarray}
\mathbf{H}^{MMSE} =  \mathbf{S} \mathbf{\Lambda}^{-1} \mathbf{\Phi}^H \mathbf{R}_H
= \mathbf{H} \mathbf{\Lambda}^{-1} \mathbf{R}_H +\mathbf{N} \mathbf{\Lambda} \mathbf{\Phi} ^H \mathbf{R}_H ,
\end{eqnarray}
where $\mathbf{\Lambda} = diag \left[ \sum_{j=1}^N  d_{j1}^2+  N_0, \dots, \sum_{j=1}^N  d_{jK}^2+ N_0 \right]$. Since $\sum_{j=1}^N  d_{jk}^2 \gg N_0$, (51) can be expressed as:
\begin{eqnarray}
\mathbf{H}^{MMSE} = \mathbf{H} +\mathbf{N} \mathbf{\Lambda}^{-1} \mathbf{\Phi} ^H \mathbf{R}_H=\mathbf{H+E}.
\end{eqnarray}
After the sparsification, the sparsed estimation matrix is:
\begin{eqnarray}
\bar{\mathbf{E}}^{MMSE} = \mathbf{\Lambda}^{-1} \mathbf{R}_H \bar{\mathbf{E}}^{LS}
\end{eqnarray}
and the $N_2$ of MMSE estimator can be expressed as:
\begin{eqnarray}
N_2^{MMSE}= \mathbb{E} \left\{ \sum_{j=1}^K P_j \vert \bar{e}_{nj} \vert^2 \right\} = \left(\frac{\mu}{\mu+N_0}\right)^2 N_2^{LS}.
\end{eqnarray}
From (54), it is proved that the form of $N_2$ suffers a coefficient variation when different estimator is applied, and the performance of MMSE estimator is better than the LS estimator.However, in general, the convexity of SINR fidelity is still holds and the proposed algorithm is still reliable when MMSE estimator is applied.

\section{Proof of the convexity of SINR fidelity with orthogonal training}

By introducing the $p.d.f$ of link distance in (26), $\uline{\rho(d_0)}$ can be written as
\setcounter{equation}{55}
\begin{eqnarray}
\begin{split}
\uline{\rho(d_0)} = &\frac{2(d_0^{2-\alpha}-r_0^{2-\alpha})+(2-\alpha)r_0^{2-\alpha}}{2(r^{2-\alpha}-r_0^{2-\alpha})+(2-\alpha)r_0^{2-\alpha}}\\
& \cdot \frac{N_0}{N_0+\frac{2P_S K}{(2-\alpha)r^2}(r^{2-\alpha}-d_0^{2-\alpha})+\frac{P_S K^2}{P_T}\frac{d_0^2}{r^2}},
\end{split}
\end{eqnarray}
where we let $\tau=K$ here to achieve the shortest training length. \par
To analyze the curve trends of $\uline{\rho(d_0)}$, it will be insightful
to calculate its derivations.
the first order derivatives of
$\uline{\rho(d_0)}$ can be expressed in (55),
\setcounter{equation}{56}
where
\begin{eqnarray}
\begin{split}
& A=\frac{N_0}{2(r^{2-\alpha}-r_0^{2-\alpha})+(2-\alpha)r_0^{2-\alpha}}, \\
&B=\frac{2P_S K}{(2-\alpha)r^2},\\
&C=\frac{P_S K^2}{P_T r^2},
\end{split}
\end{eqnarray}
respectively. Since we only concern
 about the trends of $\uline{\rho(d_0)}$, it is efficient to consider the derivatives of numerator in (55), denoted as
  $ \dot{z}(d_0)$, which can be calculated and expressed as
\begin{eqnarray}
\begin{split}
\dot{z}(d_0)=&(2-\alpha)(1-\alpha)\\
&\cdot A(2N_0+2B r^{2-\alpha} -\alpha B r_0^{2-\alpha})d_0^{-\alpha}\\
& -2\alpha(3-\alpha) AC d_0^{2-\alpha} +2\alpha A C r_0^{2-\alpha}.
\end{split}
\end{eqnarray}

Note that $A < 0 , B <0$, and $C>0$, then $\dot{z}(d_0) < 0$ for all $d_0 \in (r_0,r)$. Meanwhile
\begin{eqnarray}
\begin{split}
z(r_0)=&(2-\alpha)A(2N_0+2B r^{2-\alpha}-\alpha B r_0^{2-\alpha})r_0^{1-\alpha}>0, \\
z(r)=&(2-\alpha)A(2N_0+2B r^{2-\alpha}-\alpha B r_0^{2-\alpha})r^{1-\alpha}\\
&+2\alpha AC r(r_0^{2-\alpha}-r^{2-\alpha})<0.
\end{split}
\end{eqnarray}

Since $\dot{z}(d_0)$ is constantly negative, $z(d_0)$ is a monotonically
decreasing function in terms of $d_0$. Based on (58), it can be deduced that there will be one and only one zero point in
$z(d_0)$ while $d_0 \in (r_0,r]$. Denote the corresponding zero point as $d_{opt}$,
since the denominator of $\uline{\dot{\rho}(d_0)}$ is always positive, $\uline{\rho(d_0)}$ will reach the maximum when $d_0=d_{opt}$.
\section{Proof of the convexity of SINR fidelity with non-orthogonal training}
Based on (46), by introducing the $p.d.f$ in (26), the only difference
is $B$ in (56), when the training is non-orthogonal, it turns into
\begin{eqnarray}
C=\left[\frac{P_S K^2}{P_T}+2P_S K \left(1-\frac{\tau}{K}\right)\mu\right]/r^2.
\end{eqnarray}

It is clear that $C>0$, then (53) still holds with new $N_2$, so it can be deduced that (54) is also a convex function when UEs and RRHs are located
randomly, $\mathbf{Theorem}$ 2 remains true when the training is non-orthogonal. Since the convexity of (54) is also cannot be ensured when the formation
of $p.d.f$ is uncertain,
the Dinkelbach Algorithm we proposed before is also applicable in this case.

\section{Proof of Lemma 3}

To prove the optimization objective in Problem (P2) is a concave function in terms of $d_0$, we have to prove the second derivative of the optimization objective is negative. Let $G(d_0)=F_1(d_0)-qF_2(d_0)$, the second derivative of $G(d_0)$ is
\begin{eqnarray}
G''(d_0)=F_1''(d_0)-q F_2''(d_0),
\end{eqnarray}
where
\begin{eqnarray}
F_1''(d_0)=-\alpha N_0 d_0^{-\alpha-1}f'(d_0),
\end{eqnarray}
and
\begin{eqnarray}
\begin{split}
&F_2''(d_0)\\
&=\int_{r_0}^r f(x)dx \sum_{j=1}^K P_j \left[ \alpha d_0^{-\alpha-1} f'(d_0) +\frac{K^2}{\tau P_T} f'(d_0) \right].
\end{split}
\end{eqnarray}
It can be found that $F_1''(d_0)<0$, $F_2''(d_0)>0$, and $q \in (0,1)$, so $G''(d_0)$ will always be negative, the optimization objective in Problem (P2) is a concave function in terms of $d_0$.


\begin{thebibliography}{1}

\bibitem{I:bibi1}
M. Peng, C. Wang, V. Lau, and H. V. Poor,
\textquotedblleft Fronthaul-constrained cloud radio access networks: Insights and challenges,\textquotedblright
\textit{IEEE Wireless Commun.},
vol. 22, no. 2, pp. 152-160, Apr. 2015.

\bibitem{2}
M. Peng \textit{et al.},
\textquotedblleft Recent advances in cloud radio access networks: System architectures, key techniques, and open issues,\textquotedblright
\textit{IEEE Commun. Surveys \& Tutorials}, vol. 18, no. 3, pp. 2282-2308, Aug. 2016.

\bibitem{3}
M. Peng, Y. Li, Z. Zhao, and C. Wang, \textquotedblleft System architecture and key technologies for 5G heterogeneous cloud radio
access networks'',\textquotedblright
\textit{IEEE Network}, vol. 29, no. 2, pp. 6--14, Mar. 2015.

\bibitem{I:4}
Q. Hu, M. Peng, Z. Mao, X. Xie, and H. V. Poor, \textquotedblleft Training design for channel estimation in uplink cloud radio access networks, \textquotedblright\textit{IEEE Trans. Sig. Proc.}, vol. 64, no. 13, pp. 3324-3337, May. 2016.

\bibitem{I:5}
X. Xie, M. Peng, F. Gao, and W. Wang, \textquotedblleft Superimposed training based channel estimation for uplink multiple access relay networks, \textquotedblright\textit{IEEE Trans. Wireless Commun.}, vol. 14, no. 8, pp. 4439-4453, Aug. 2015.

\bibitem{I:6}
X. Xie, M. Peng, W. Wang, and H. V. Poor, \textquotedblleft Training design and channel estimation in uplink cloud radio access networks, \textquotedblright\textit{IEEE Sig. Proc. Let.}, vol. 22, no. 8, pp. 1060-1064, Aug. 2015.

\bibitem{I:7}
M. Peng, C. Wang, J. Li, H. Xiang, and V. Lau, \textquotedblleft Recent advances in underlay heterogeneous networks: Interference control, resource allocation, and self-organization, \textquotedblright\textit{IEEE Commun. Sur. Tut.}, vol. 17, no. 2, pp. 700-729, second quarter, 2015.

\bibitem{I:8}
X. Huang, G. Xue, R. Yu and S. Leng,
\textquotedblleft Joint scheduling and beamforming coordination in cloud radio access networks with QoS guarantees, \textquotedblright\textit{IEEE Trans. Veh. Tech.}, vol. 65, no. 7, pp. 5449-5460, Jul. 2016.

\bibitem{I:9}
M. Peng, Y. Li, J. Jiang, J. Li, and C. Wang, \textquotedblleft Heterogeneous cloud radio access networks: A new perspective for enhancing spectral and energy efficiencies, \textquotedblright\textit{IEEE Wireless Commun.}, vol. 21, no. 6, pp. 126-135, Dec. 2014.

\bibitem{I:10}
M. Peng, S. Yan and H. V. Poor,
\textquotedblleft Ergodic capacity analysis of remote radio head associations in cloud radio access networks, \textquotedblright\textit{IEEE Wireless Commun. Lett.}, vol. 12, no. 11, pp. 1-12, Nov. 2015.

\bibitem{I:11}
Z. Zhao, M. Peng, Z. Ding, W. Wang and H. V. Poor, \textquotedblleft Cluster content caching: An energy-Efficient approach to improve quality of service in cloud radio access networks, \textquotedblright\textit{IEEE J. Sel. Areas Commun.}, vol. 34, no. 5, pp. 1207-1221, May. 2016.

\bibitem{I:12}
X. Xie, M. Peng, Y. Li, and H. V. Poor, \textquotedblleft Channel estimation for two-Way relay networks in the presence of synchronization errors, \textquotedblright\textit{IEEE Trans. Sig. Proc.}, vol. 62, no. 23, pp. 6235-6248, Dec. 2014.

\bibitem{I:13}
Y. Shi, J. Zhang and K. B. Letaief,
\textquotedblleft Robust group sparse beamforming for multicast green cloud-RAN with imperfect CSI, \textquotedblright\textit{IEEE Trans. Signal Process.}, vol. 63, no. 17, pp. 4647-4659, Sep. 2015.

\bibitem{I:14}
Y. Shi, J. Zhang, B. O'Donoghue and K. B. Letaief,
\textquotedblleft Large-scale convex optimization for dense wireless cooperative networks, \textquotedblright\textit{IEEE Trans. Signal Process.}, vol. 63, no. 18, pp. 4729-4743, Sep. 2015.

\bibitem{I:15}
L. Sanguinetti, R. Couillet and M. Debbah,
\textquotedblleft Large system analysis of base station cooperation for power minimization, \textquotedblright\textit{IEEE Trans. Wireless Commun.}, vol. 15, no. 8, pp. 5480-5496, Aug. 2016.

\bibitem{I:16}
M. Peng, H. Xiang, Y. Cheng, S. Yan, and H. V. Poor, \textquotedblleft Inter-tier interference suppression in heterogeneous cloud radio access networks, \textquotedblright\textit{IEEE Access}, vol 2015, no. 3, pp. 2441-2455 Dec. 2015.

\bibitem{I:17}
S. Akoum, and R. Heath,
\textquotedblleft Interference coordination: random clustering and adaptive limited feedback,\textquotedblright
\textit{IEEE Trans. Signal Processing.}, vol. 61, no. 7, pp. 1822-1834, Apr. 2013.

\bibitem{I:18}
M. Peng, Y. Yu, H. Xiang, and H. V. Poor, \textquotedblleft Energy-e?cient resource allocation optimization for multimedia heterogeneous cloud radio access networks, \textquotedblright\textit{IEEE Trans. Multimedia}, vol. 18,no. 5, pp. 879 -892, May 2016.

\bibitem{I:19}
C. Fan, Y. J. Zhang and X. Yuan,
\textquotedblleft Dynamic Nested Clustering for Parallel PHY-Layer Processing in Cloud-RANs ,\textquotedblright
\textit{IEEE Trans. Wireless Commun.}, vol. 15, no. 3, pp. 1881-1894, Mar. 2016.

\bibitem{I:20}
Y. Chan, and K. Ho,
\textquotedblleft A simple and efficient estimator for hyperbolic location,\textquotedblright
\textit{IEEE Trans. Signal Process.}, vol. 42, no. 8, pp. 1905-1915, Aug. 2014.

\bibitem{I:21}
G. Wang, Q. Liu, R. He, F. Gao, and C. Tellambura,
\textquotedblleft Acquisition of channel state information in heterogeneous cloud radio access networks: challenges and research directions, \textquotedblright\textit{IEEE Wirel. Commun. }, vol. 22, no. 3, pp. 100-107, Jun. 2015.

\bibitem{II:22}
D. Moltchanov, \textquotedblleft Distance distributions in random networks ,\textquotedblright
\textit{Ad Hoc Networks}, vol. 10, no. 6, pp. 1146-1161, Mar. 2012.

\bibitem{II:23}
J. C. Shen, J. Zhang and K. B. Letaief,
\textquotedblleft Downlink User Capacity of Massive MIMO Under Pilot Contamination, \textquotedblright  \textit {IEEE Trans. Wireless Commun.}, vol. 14, no. 6, pp. 3183-3193, Jun. 2015.

\bibitem{II:24}
J. Jose, A. Ashikhmin, T. L. Marzetta and S. Vishwanath,
\textquotedblleft Pilot Contamination and Precoding in Multi-Cell TDD Systems, \textquotedblright\textit {IEEE Trans. Wireless Commun.}, vol. 10, no. 8, pp. 2640-2651, Aug. 2011.

\bibitem{II:25}
M. Biguesh and A. B. Gershman,
\textquotedblleft Training-based MIMO channel estimation: a study of estimator tradeoffs and optimal training signals ,\textquotedblright
\textit{IEEE Trans. Signal Process.}, vol. 54, no. 3, pp. 884-893, Mar. 2006.

\bibitem{II:26}
G. Wang, F. Gao, R. Fan and C. Tellambura,
\textquotedblleft Ambient Backscatter Communication Systems: Detection and Performance Analysis, \textquotedblright~ \textit {IEEE Trans. Wireless Commun.}, vol. 64, no. 11, pp. 4836-4846, Nov. 2016.

\end{thebibliography}
\end{document}